\documentclass[submission,copyright,creativecommons]{eptcs}

\usepackage{iftex}

\ifpdf
  \usepackage{underscore}         %
  \usepackage[T1]{fontenc}        %
\else
  \usepackage{breakurl}           %
\fi

\title{A Simple and Effective ASP-Based Tool for Enumerating Minimal Hitting Sets}
\author{Mohimenul Kabir
\institute{National University of Singapore}
\institute{School of Computing\\
Singapore}
\and
Kuldeep S Meel
\institute{Georgia Institute of Technology}
\institute{School of Computer Science\\
USA}
}

\usepackage{amsmath}
\usepackage{amsthm}
\usepackage{amssymb} %
\usepackage{array} %
\usepackage{amsfonts}
\usepackage{cleveref}
\usepackage{flexisym}
\usepackage{xfrac}
\usepackage{algorithm}
\usepackage{algpseudocode}
\usepackage{bbm}
\usepackage{xspace}
\usepackage{tikz}
\usepackage{booktabs}
\usepackage{subcaption}
\usepackage[mathcal]{eucal}
\usepackage{graphicx}
\DeclareGraphicsExtensions{.pdf,.png,.jpg}
\usetikzlibrary{positioning}

\newcommand{\Card}[1]{|#1|}

\newcommand{\true}{\ensuremath{\mathsf{true}}\xspace}
\newcommand{\false}{\ensuremath{\mathsf{false}}\xspace}

\newcommand{\toolname}{\ensuremath{\mathsf{MinHit}}-\ensuremath{\mathsf{ASP}}}

\algnewcommand\algorithmicforeach{\textbf{for each}}
\algdef{S}[FOR]{ForEach}[1]{\algorithmicforeach\ #1\ \algorithmicdo}

\newcommand{\rules}[1]{#1}
\newcommand{\dis}{\mathsf{dis}}
\newcommand{\body}[1]{\mathsf{Body}(#1)}
\newcommand{\at}[1]{\mathsf{atoms}(#1)}
\newcommand{\head}[1]{\mathsf{Head}(#1)}
\newcommand{\answer}[1]{\mathsf{AS}(#1)}
\newcommand{\dlp}[1]{\mathsf{HS2ASP}(#1)}
\newcommand{\set}{\mathcal{S}}
\newcommand{\universe}{\mathcal{U}}
\newcommand{\itemtovar}[1]{g(#1)}
\newcommand{\hittingset}[1]{\mathsf{MHS}(#1)}

\usepackage{listings}
\renewcommand{\lstlistingname}{Encoding}
\lstset{
  backgroundcolor=\color{backcolour},   
  commentstyle=\color{codegreen},
  keywordstyle=\color{magenta},
  language = Prolog,
  literate = {-}{-}1,
  breaklines=true, 
  numbers=left,
  basicstyle=\sffamily,
}
\usepackage{xcolor}
\definecolor{codegreen}{rgb}{0,0.6,0}
\definecolor{codegray}{rgb}{0.5,0.5,0.5}
\definecolor{codepurple}{rgb}{0.58,0,0.82}
\definecolor{backcolour}{rgb}{0.95,0.95,0.92}
\newtheorem{lemma}{Lemma}

\newtheorem{example}{Example}
\begin{document}
\maketitle

\begin{abstract}
  The hitting set problem is a fundamental problem in computer science and mathematics. Given a family of sets over a universe of elements, a minimal hitting set is a subset-minimal collection of elements that intersects each set in the family. Enumerating all minimal hitting sets is crucial in various real-world applications. 

  In this paper, we address the full enumeration of all minimal hitting sets for a given family of sets. 
  We formulate the problem using Answer Set Programming (ASP) and leverage existing ASP solvers for efficient enumeration. 
  We propose an ASP-based tool, \toolname, and our empirical evaluation shows that it effectively enumerates minimal hitting sets across benchmarks from diverse problem domains.  
\end{abstract}

\paragraph*{Keywords.} Tool: \toolname, Answer Set Solving, Minimal Hitting Set, Answer Set Enumeration.

\section{Introduction}
Given a family of sets $\set = \{S_1, \ldots, S_k\}$, a {\em hitting set} of $\set$ is a set $h$ that intersects every set of $\set$~\cite{VO2000}.
A hitting set $h$ is called {\em minimal hitting set} (MHS) if no proper subset of $h$ is a hitting set of $\set$.
In the MHS problem, each $S_i \in \set$ is a {\em set} and the union $\universe = \bigcup_{i} S_i$ is called the {\em universe}.
The problem definition is closely related to {\em minimal hypergraph traversal}~\cite{GMKT1997}.
The minimal hitting sets have applications in a wide range of domains, such as diagnosis~\cite{AG2009,BS2005,Reiter1987},
explanability~\cite{ILSM2021,SU2006}, data mining~\cite{BMR2003,BGKM2003,GMKT1997}, bioinformatics and computational biology~\cite{KG2004,TWS2009}, and combinatorial games~\cite{MTC2014}.

In many cases, a family of sets $\set$ has multiple minimal hitting sets (MHSes); in fact, the number of MHSes can be exponential~\cite{CKMV2011}. In such settings, enumerating all MHSes is often desirable to gain a deeper understanding of the underlying problem. 
For instance, consider a faulty system where each {\em diagnosis} corresponds to a hitting set of the problem~\cite{Reiter1987}. Enumerating all hitting sets provides a complete characterization of the possible fault diagnoses. Similarly, the OCSANA framework~\cite{VBBZ2013} computes all MHSes of {\em elementary paths} to analyze {\em signal transduction networks}.

A wide range of techniques have been developed for computing minimal hitting sets (MHSes). To the best of our knowledge, the first attempt to compute MHSes was introduced by Berge~\cite{Berge1984} in the context of hypergraph traversal. Later, Reiter proposed a method for computing MHSes using a specialized data structure known as the {\em hitting set tree}~\cite{Reiter1987}.  
Over the years, numerous MHS algorithms have been developed across various domains, including hypergraph traversal~\cite{Berge1984,MU2013}, model-based diagnosis~\cite{GSW1989}, data mining~\cite{BMR2003,DL2005}, poset theory~\cite{LMLX2010}, Boolean algebra~\cite{FK1996}, integer programming~\cite{BEGKM2002}, binary decision diagrams~\cite{Toda2013}, and evolutionary computation~\cite{LY2002}.  
In contrast, we adopt an alternative approach by formulating MHS enumeration as an {\em answer set program}~\cite{GKS2012}.

Answer Set Programming (ASP)~\cite{MT1999} is a well-established declarative paradigm in knowledge representation and reasoning, widely used for modeling complex combinatorial problems~\cite{EGL2016}. 
An ASP {\em program} consists of \textit{logical rules (grounded)} over {\em propositional atoms}, encoding domain knowledge and queries. 
A solution to an ASP program, known as an \emph{answer set}, is an assignment of values to propositional atoms that satisfies the program’s semantics.  
ASP programs are classified into two categories based on their expressiveness: {\em disjunctive} and {\em normal} ASP programs~\cite{Lierler2005}. However, the higher expressiveness comes at the cost of computational complexity. Determining whether a disjunctive ASP program has an answer set is $\textstyle \sum_{2}^{P}$-complete~\cite{EG1995}, whereas for normal ASP programs, the complexity reduces to NP-complete~\cite{MT1991}.

The main contribution of this paper is the design, implementation, and experimental evaluation of a new MHS enumeration tool called \toolname. 
\toolname~exploits the efficiency of ASP solvers for the MHS enumeration. 
More specifically, \toolname~reduces MHS enumeration to answer set enumeration problem of a {\em head-cycle free} disjunctive answer set program~\cite{Truszczynski2011}.
Note that the MHS problem has also been reduced to ASP solving in earlier work~\cite{AD2016}.
According to the computational complexity, the minimal hitting set problem is NP-hard~\cite{Karp2009,RNW1983}.
The decision problem for disjunctive logic programs lies in $\textstyle \sum_{2}^{P}$~\cite{EG1995}.
It is worth noting that reducing one problem to a harder one is not uncommon in combinatorial problem-solving~\cite{LSK2023}. 
It is well known to exploit ASP solvers in real-world diagnosis problems~\cite{GVETCSS2013,WK2022} and related problems~\cite{KM2024,KM2025}.
To the end, we conduct experimental evaluation to evaluate \toolname~on real-world hitting set benchmarks.
Our extensive evaluation reveals that \toolname~outperforms existing systems of all MHSes enumeration problem.

The remainder of this paper is organized as follows: \Cref{section:preliminaries} provides the necessary background to understand our main contribution. \Cref{section:methods} describes the methodology and implementation of \toolname. \Cref{section:experiment} presents the experimental evaluation of \toolname\ on real-world benchmark datasets. Finally, we conclude our study in \Cref{section:conclusion}.

\section{Preliminaries}
\label{section:preliminaries}

\paragraph{Answer Set Programming.}
An \textit{answer set program} $P$ consists of a set of rules, each rule is structured as follows:
\begin{align}
\label{eq:generalrule}
\text{Rule $r$:~~}a_1 \vee \ldots \vee a_k \leftarrow b_1, \ldots, b_m, \textsf{not } c_1, \ldots, \textsf{not } c_n
\end{align}
where, $a_1, \ldots, a_k, b_1, \ldots, b_m, c_1, \ldots, c_n$ are {\em propositional variables} or {\em atoms}, and $k,m,n$ are non-negative integers. 
The notations $\rules{P}$ and $\at{P}$ denote the rules and atoms within the program $P$. 
In rule $r$, the operator ``\textsf{not}'' denotes \textit{default negation}~\cite{clark1978}. For each 
rule $r$ (\cref{eq:generalrule}), we adopt the following notations: the atom set $\{a_1, \ldots, a_k\}$ constitutes the {\em head} of $r$, denoted by $\head{r}$, the set $\{b_1, \ldots, b_m\}$ is referred to as the {\em positive body atoms} of $r$, denoted by $\body{r}^+$, and the set $\{c_1, \ldots, c_n\}$ is referred to as the \textit{negative body atoms} of $r$, denoted by $\body{r}^-$.
We often represent empty body ($\body{r}^+ = \emptyset$ and $\body{r}^- = \emptyset$) by the notation $\top$.
A program $P$ is called a {\em disjunctive logic program} if $\exists r \in \rules{P}$ such that $\Card{\head{r}} \geq 2$~\cite{BD1994}.
In this paper, we use the ASP-Core-$2$ standard input language for ASP programs~\cite{CFGIKK2020}.

In ASP, an interpretation $M$ over $\at{P}$ specifies which atoms are assigned \true; that is, an atom $a$ is \true under $M$ if and only if $a \in M$ (or \false when $a \not\in M$ resp.). 
An interpretation $M$ satisfies a rule $r$, denoted by $M \models r$, if and only if $(\head{r} \cup \body{r}^{-}) \cap M \neq \emptyset$ or $\body{r}^{+} \setminus M \neq \emptyset$. An interpretation $M$ is a {\em model} of $P$, denoted by $M \models P$, when $\forall_{r \in \rules{P}} M \models r$. 
The \textit{Gelfond-Lifschitz (GL) reduct} of a program $P$, with respect to an interpretation $M$, is defined as $P^M = \{\head{r} \leftarrow \body{r}^+| r \in \rules{P}, \body{r}^- \cap M = \emptyset\}$~\cite{GL1991}.
An interpretation $M$ is an {\em answer set} of $P$ if $M \models P$ and no $M\textprime \subsetneq M$ exists such that $M\textprime \models P^M$.
We denote the answer sets of program $P$ using the notation $\answer{P}$.

\section{Methodology and Architecture}
\label{section:methods}
In this section, we present the methodology and architecture of \toolname. At a high level, \toolname\ transforms the MHS enumeration problem into an ASP solving task. We first describe the reduction process from MHS enumeration to ASP solving. Then, we outline the architecture of \toolname.

\paragraph{From MHS enumeration to ASP Solving.}
Given a family of sets $\set$, we define an operator $\dlp{\set}$ that constructs an ASP program. In the reduction to ASP solving, given the {universe} $\universe = \bigcup_{i} S_i$, we introduce a mapping $\itemtovar{\cdot}$, which maps each element $x \in \universe$ to a unique propositional atom $\itemtovar{x}$. 
The operator $\dlp{\set}$ is applied to each set $S_i = \{a_1, \ldots, a_{\ell}\} \in \set$, generating a corresponding rule $\itemtovar{a_1} \vee \ldots \vee \itemtovar{a_{\ell}} \leftarrow \top.$, as illustrated in Encoding~\ref{code:mhstoasp}. 
The intuition behind introducing $\dlp{\set}$ is that each model of $\dlp{\set}$ corresponds to a hitting set of $\set$, and vice versa. 
Moreover, each answer set of $\dlp{\set}$ corresponds precisely to a minimal hitting set of $\set$.
We formally prove that the answer sets of $\dlp{\set}$ correspond one-to-one with the minimal hitting sets of $\set$.

\begin{figure*}
    \begin{lstlisting}[caption={The operator $\dlp{\set}$ for a family of sets $\set$. The lines starting with \% sign are comments. We often use the notation $\top$ to denote empty rule body.},label={code:mhstoasp},captionpos=b,mathescape=true,escapechar=|,xleftmargin=0.1cm]
      $\itemtovar{a_1} \vee \ldots \vee \itemtovar{a_{\ell}} \leftarrow \top.$|\label{line:eachvariable}|
      \end{lstlisting}    
      \renewcommand{\lstlistingname}{Algorithm}
\end{figure*}

\begin{lemma}
    \label{lemma:mhstoaspproof}
    For a family of sets $\set$,
    \begin{enumerate}
        \item each minimal hitting set of $\set$ corresponds to an answer set of $\dlp{\set}$.
        \item each answer set of $\dlp{\set}$ corresponds to a minimal hitting set of $\set$.
    \end{enumerate}
\end{lemma}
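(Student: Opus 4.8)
The plan is to exploit the fact that $\dlp{\set}$ is a \emph{negation-free} (purely positive) disjunctive program: every rule produced by $\dlp{\cdot}$ has the form $\itemtovar{a_1} \vee \ldots \vee \itemtovar{a_\ell} \leftarrow \top$, so $\body{r}^- = \emptyset$ for every $r$. Consequently, for any interpretation $M$ the Gelfond--Lifschitz reduct satisfies $\dlp{\set}^M = \dlp{\set}$, and therefore $M \in \answer{\dlp{\set}}$ if and only if $M$ is a $\subseteq$-minimal model of $\dlp{\set}$. This collapses the lemma to a statement purely about (minimal) classical models.

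Next I would set up the bookkeeping bijection. The atoms of $\dlp{\set}$ are exactly $\{\itemtovar{x} : x \in \universe\}$, and since $\itemtovar{\cdot}$ is injective, the map $h \mapsto M_h := \{\itemtovar{x} : x \in h\}$ is an order-isomorphism between the subsets of $\universe$ and the interpretations over $\at{\dlp{\set}}$, i.e.\ it preserves and reflects $\subseteq$. I would then prove the key equivalence: for $h \subseteq \universe$, $M_h \models \dlp{\set}$ iff $h$ is a hitting set of $\set$. This is immediate by unfolding the definition of satisfaction on the rule $r_i$ coming from $S_i = \{a_1,\dots,a_\ell\}$: since $\body{r_i}^+ = \body{r_i}^- = \emptyset$, we have $M_h \models r_i$ iff $\head{r_i} \cap M_h \neq \emptyset$ iff $S_i \cap h \neq \emptyset$; quantifying over all $i$ gives the claim.

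Combining the two observations finishes the argument. Because $h \mapsto M_h$ is an order-isomorphism carrying hitting sets exactly to models, it carries $\subseteq$-minimal hitting sets exactly to $\subseteq$-minimal models, i.e.\ to answer sets. Statement~1 is then the ``$\Rightarrow$'' direction: if $h$ is a minimal hitting set, $M_h$ is a minimal model, hence an answer set. Statement~2 is the ``$\Leftarrow$'' direction: if $M$ is an answer set it is a minimal model, so $M = M_h$ for the hitting set $h = \{x : \itemtovar{x} \in M\}$; were $h$ not minimal, some hitting set $h' \subsetneq h$ would give a model $M_{h'} \subsetneq M$, contradicting minimality of $M$.

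I do not expect a genuine obstacle here; the only point requiring a word of care is the degenerate case in which some $S_i = \emptyset$, yielding a rule with empty head $\bot \leftarrow \top$: then $\dlp{\set}$ has no model and $\set$ has no hitting set, so the correspondence holds vacuously. Otherwise the whole proof is a careful bookkeeping of definitions, with the ``reduct is trivial for negation-free programs'' observation doing the real work.
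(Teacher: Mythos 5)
Your proposal is correct and follows essentially the same route as the paper's proof: both rest on the observation that $\dlp{\set}$ is negation-free, so the Gelfond--Lifschitz reduct is the program itself and answer sets coincide with subset-minimal models, which the bijection $h \mapsto M_h$ identifies with minimal hitting sets. Your version merely packages the two directions more uniformly as an order-isomorphism argument and adds the (harmless) remark about an empty $S_i$, neither of which changes the substance.
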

\begin{proof}
    \textbf{Part of `1'}: Without loss of generality, assume that $h = \{a_1, \ldots, a_{\ell}\}$ is a minimal hitting set of $\set$. 
    We show that $M = \{\itemtovar{a_1}, \ldots, \itemtovar{a_{\ell}}\}$ is an answer set of $\dlp{\set}$.
    
    Note that $M \models \dlp{\set}$ and $\dlp{\set}^{M} = \dlp{\set}$, since there is no negation (\textsf{not}) in $\dlp{\set}$. 
    Clearly, there is no $M\textprime \subsetneq M$ such that $M\textprime \models \dlp{\set}^{M}$.
    Otherwise, $h$ is not a minimal hitting set. Thus, $M$ is an answer set of $\dlp{\set}$.

    \textbf{Part of `2'}: 
    Assume that there is an answer set $M = \{\itemtovar{a_1}, \ldots, \itemtovar{a_{\ell}}\} \in \answer{\dlp{\set}}$. We show that $h = \{a_1, \ldots, a_{\ell}\}$ is a minimal hitting set of $\set$.

    According to the answer set semantics, $\forall S_i \in \set$, $S_i \cap h \neq \emptyset$, i.e., $h$ is a hitting set of $\set$. 
    Clearly, there is no $h\textprime \subsetneq h$; otherwise, from the hitting set $h\textprime$, $\exists M\textprime \subset M$ such that $M\textprime \models \dlp{\set}^{M}$, thus $M \not \in \answer{\dlp{\set}}$ --- a contradiction. 
    So, $h$ is a minimal hitting set of $\set$.
\end{proof}

\paragraph{Architecture of \toolname.}
The architecture of \toolname\ is illustrated in~\Cref{figure:architecture}. Given a family of sets $\set$, \toolname\ constructs the corresponding ASP program $\dlp{\set}$, as defined in Encoding~\ref{code:mhstoasp}. It then invokes an ASP solver to enumerate the answer sets of $\dlp{\set}$. 

Notably, although the disjunctive ASP program $\dlp{\set}$ can be translated into a normal logic program~\cite{LL2003}, this translation results in a quadratic blowup~\cite{AD2016}.

\begin{figure}
    \centering
    \begin{tikzpicture}[node distance=1cm]
        \node (block1) [draw, rectangle, fill=blue!10, minimum width=2.5cm, minimum height=1.2cm] {$\dlp{\set}$};
        \node (block2) [draw, rectangle, fill=orange!20, right=of block1, minimum width=2.5cm, minimum height=1.2cm] {ASP Solver};
        
        \draw[->, thick] ([xshift=-1cm]block1.west) -- node[pos=0, anchor=east]{$\set$} (block1.west);

        \draw[->, thick] (block1.east) -- node[above]{} (block2.west);

        \draw[->, thick] (block2.east) -- node[pos=1, anchor=west]{$\hittingset{\set}$} ([xshift=1cm]block2.east);
    
    \end{tikzpicture}
    \caption{The high-level methodology of~\toolname.}
    \label{figure:architecture}
\end{figure}
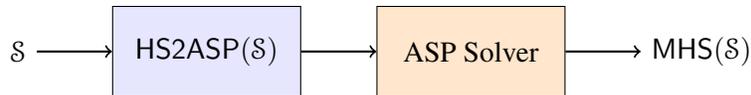

\begin{example}
    Consider the family of sets $\set = \{\{1,2\}, \{3\}, \{2,3,4\}\}$. 
    The sets $\set$ has two minimal hitting sets: $\{1,3\}$ and $\{2,3\}$.
    
    The ASP program $\dlp{\set}$ consists of rules: 
    $\{\itemtovar{1} \vee \itemtovar{2} \leftarrow \top.\text{ }\itemtovar{3} \leftarrow \top.\text{ }\itemtovar{2} \vee \itemtovar{3} \vee \itemtovar{4} \leftarrow \top.\}$, and 
    it has two answer sets: $\{\itemtovar{1}, \itemtovar{3}\}$ and $\{\itemtovar{2}, \itemtovar{3}\}$, which correspond to two MHSes of $\set$.
\end{example}

\section{Experimental Evaluation}
\label{section:experiment}
We present the experimental evaluation of \toolname~on publicly available benchmarks from various domains. The primary objective is to assess how well \toolname\ performs compared to existing tools for minimal hitting set enumeration.
The dataset and code used in our experimental evaluation are available at: \url{https://zenodo.org/records/15108391}.

\paragraph{Design Choice}
We employed Clingo v$5.7.1$~\cite{GKS2009,GKS2013} as the underlying ASP solver in \toolname. We also evaluated Wasp v$2$~\cite{ADLR2015} (with the parameter \texttt{--disjunction=auto}), but we observed that Wasp did not outperform Clingo in our experiments\footnote{We observed that Clingo and Wasp are able to solve $1948$ and $1806$ instances, respectively.}. 

\paragraph{Benchmark and Baseline}
Our benchmark suite consists of instances drawn from real-world domains involving minimal hitting set problems, including unique column combinations~\cite{BBFNPS2020}, cluster vertex deletion~\cite{VS2020}, metabolic reactions~\cite{HK2011}, cell signaling networks~\cite{ZS2005}, the Connect-4 board game~\cite{DG2017}, frequent itemset mining~\cite{MU2013}, graph theory~\cite{BEGK2003,KS2005}, and combinatorial circuits from the ISCAS$85$ suite~\cite{KNPetal2009}. 
We also include randomly generated instances from~\cite{MU2013}. The benchmark set is primarily compiled from datasets used in~\cite{BFSW2022,MU2013}. These benchmarks contain up to $3,\!511$ elements and $1,\!973,\!734$ sets.

We compare \toolname\ against several existing systems capable of enumerating minimal hitting sets. Specifically, we evaluate the performance of Hitman~\cite{AAJ2018}, Sparsity-based Hypergraph Dualization (SHD)~\cite{MU2013}, and MtMiner~\cite{HBC2007}. For Hitman, we use the MCS-based technique~\cite{AAJ2018,MPM2015} (this setting enumerates MHSes in an unordered way, which makes for the fairest comparison.). 
We also include SAT-based~\cite{ES2003,LYZR2021} and Integer Linear Programming (ILP)-based approaches in our comparison. These techniques iteratively invoke a SAT solver and the Gurobi ILP solver (v$11.0.1$), respectively, to enumerate all minimal hitting sets.
Since MHSes correspond to {\em minimal models}, we do not compare against tools designed specifically for full AllSAT enumeration, such as HALL~\cite{FNS2023} and the method from~\cite{MSS2023}.

\paragraph{Environmental Settings} 
All experiments were conducted on a high-performance computing cluster, with each node consisting of Intel Xeon Gold $6248$ CPUs.
Each benchmark instance was executed on a single core, with a time limit of $1000$ seconds and a memory limit of $16$\,GB for all considered tools.

\paragraph{Experimental Results}
\begin{figure}
    \centering
    \includegraphics[width=0.7\linewidth]{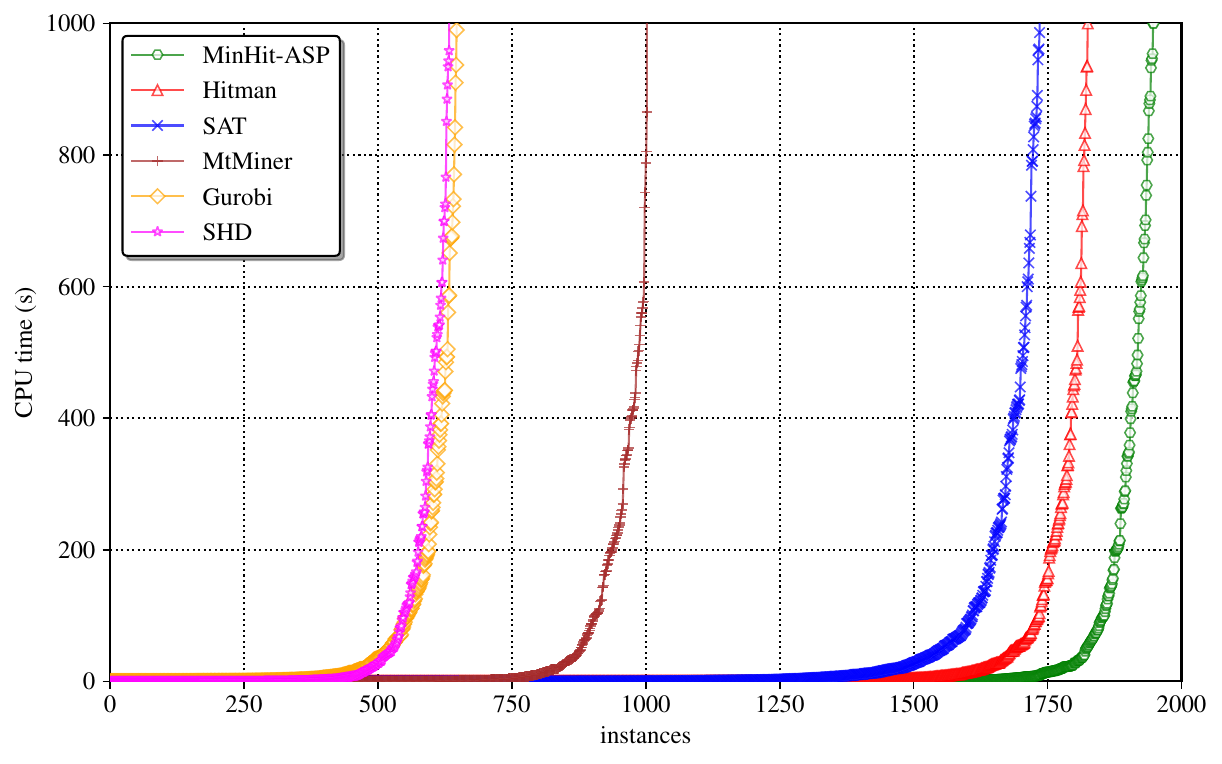}
    \caption{The runtime comparison of different minimal hitting set enumerators.}
\label{fig:runtime}
\end{figure}

\begin{table}[h]
    \centering
    \begin{tabular}{m{4em} m{3em} m{3em} m{4em} m{3em} m{4em} m{6em}} 
    \toprule
    & SAT & ILP & MtMiner & SHD & Hitman & \toolname\\
    \midrule
    \#Solved & 1735 & 647 & 1002 & 633 & 1846 & \textbf{1948}\\
    \bottomrule
    \end{tabular}
    \caption{The performance comparison of \toolname~vis-a-vis existing hitting set solver.}
    \label{table:hittingsetresult}
\end{table}
\begin{table}[h]
    \centering
    \begin{tabular}{m{4em} m{3em} m{3em}} 
    \toprule
    & VBS1 & VBS2 \\
    \midrule
    \#Solved & 1906 & 2000 \\
    \bottomrule
    \end{tabular}
    \caption{The performance comparison of virtual best solvers. The virtual best solver VBS1 combines all baselines except \toolname~and VBS2 combines all baselines including \toolname.}
    \label{table:virtualbestresult}
\end{table}

\begin{figure*}
    \centering
    \begin{subfigure}[t]{0.48\textwidth}
        \centering
        \includegraphics[width=0.9\linewidth]{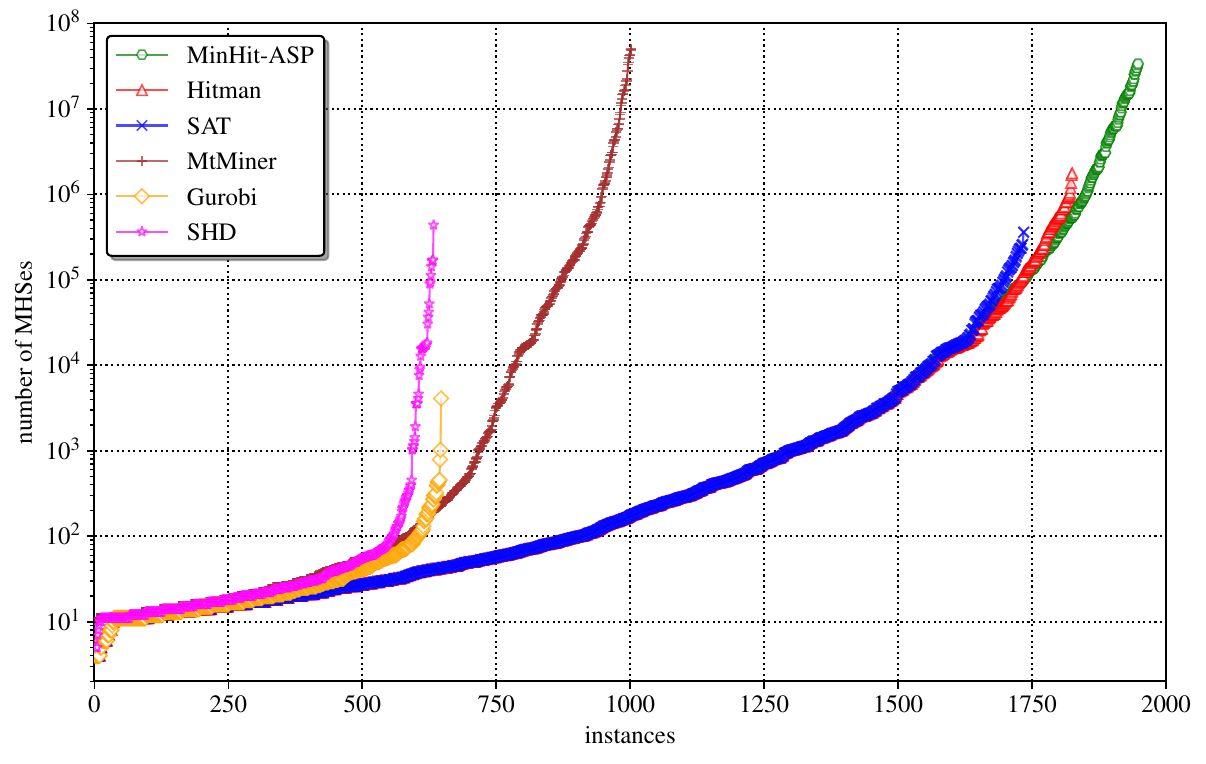}
        \caption{The performance comparison in terms of the number of minimal hitting sets enumerated.}
    \label{fig:countminimalhittingset}
    \end{subfigure}
    \begin{subfigure}[t]{0.48\textwidth}
        \centering
        \includegraphics[width=0.9\linewidth]{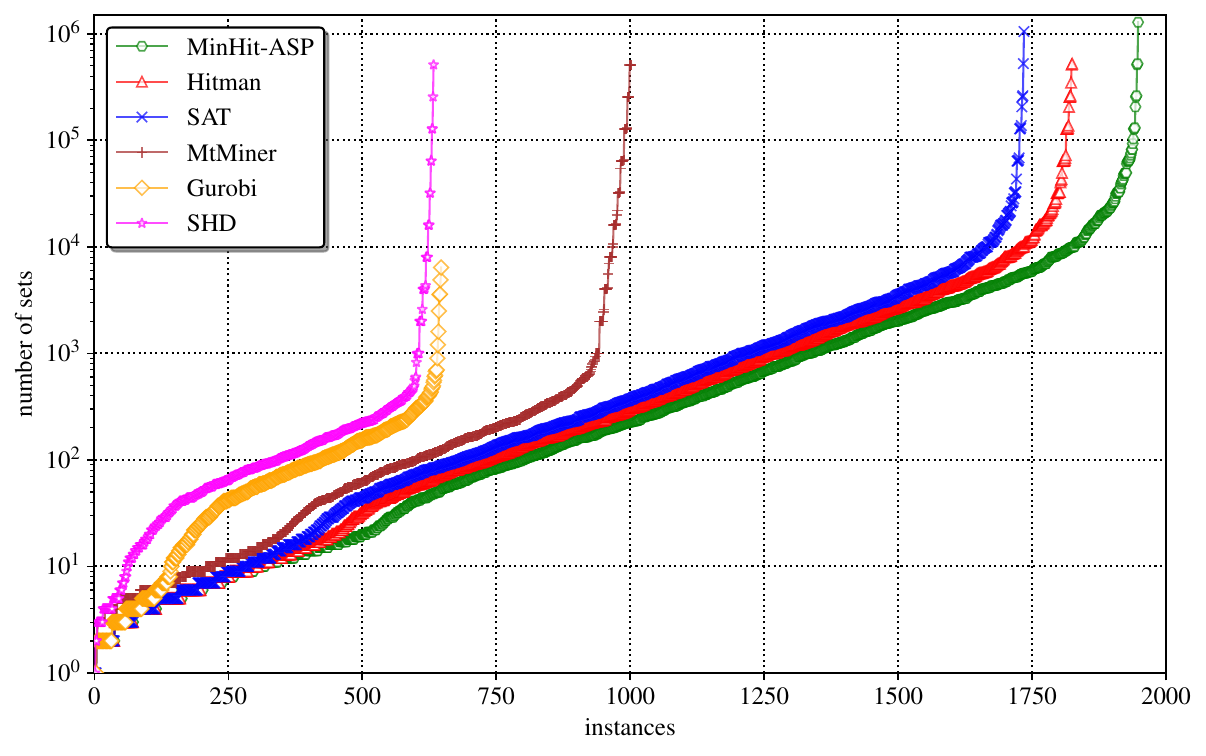}
        \caption{The performance comparison in terms of the size of instances.}
    \label{fig:sizeminimalhittingset}
    \end{subfigure}
    \begin{subfigure}[t]{0.48\textwidth}
        \centering
        \includegraphics[width=0.9\linewidth]{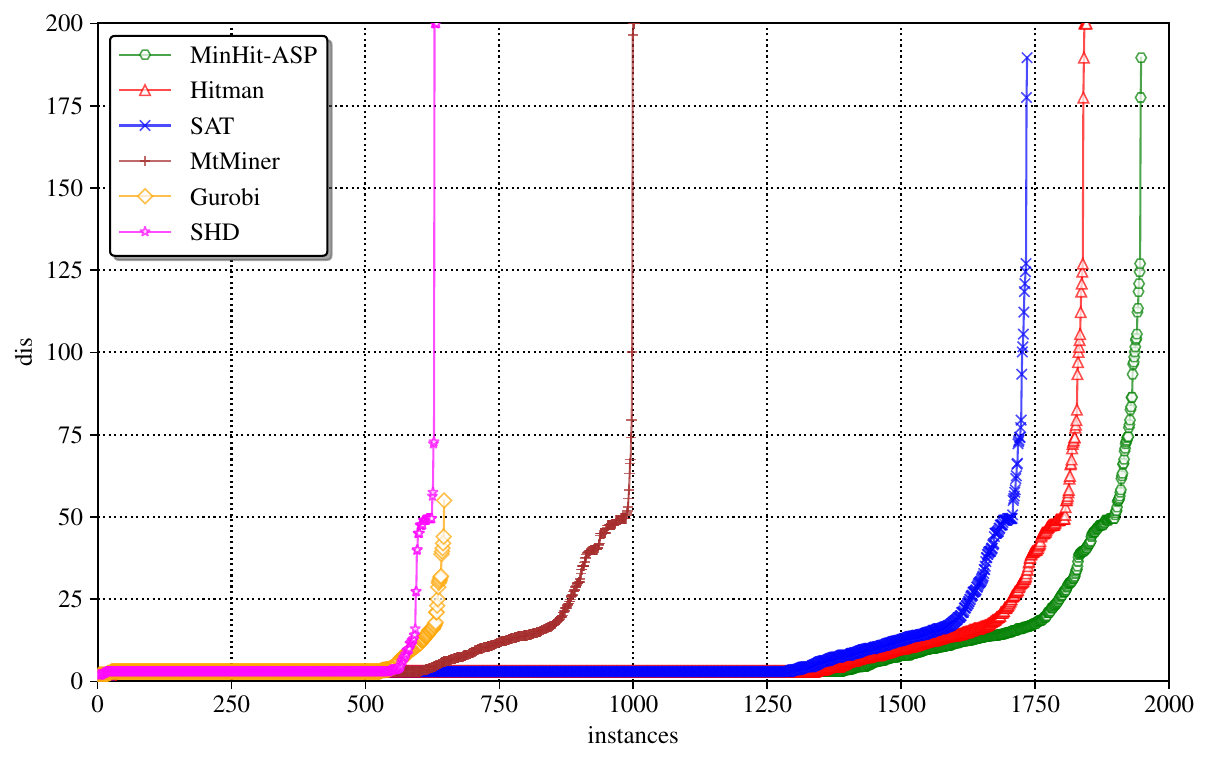}
        \caption{The performance comparison in terms of the average disjunction size.}
    \label{fig:disjunctionminimalhittingset}
    \end{subfigure}
    \caption{The performance comparison of \toolname~vis-a-vis existing minimal hitting set solvers.}
    \label{fig:comparison}
\end{figure*}

\begin{table}[h]
    \centering
    \begin{tabular}{m{7em} m{3em} m{3em} m{3em} m{3em} m{3em} m{3em} m{6em}} 
    \toprule
    & $\sum$ & SAT & ILP & MtMiner & SHD & Hitman & \toolname\\
    \midrule
    $1 \leq \Card{\set} <10^1$ & 3 & 3 & 3 & 0 & 1 & 3 & 3\\
    \midrule
    $10^1 \leq \Card{\set} <10^2$ & 320 & 285 & 141 & 225 & 61 & 302 & \textbf{315}\\
    \midrule
    $10^2 \leq \Card{\set} <10^3$ & 1312 & 397 & 280 & 376 & 280 & 426 & \textbf{483}\\
    \midrule
    $10^3 \leq \Card{\set} <10^4$ & 595 & 529 & 216 & 341 & 264 & 532 & \textbf{533}\\
    \midrule
    $10^4 \leq \Card{\set} <10^5$ & 524 & 452 & 7 & 25 & 16 & 478 & \textbf{494}\\
    \midrule
    $10^5 \leq \Card{\set} <10^6$ & 187 & 60 & 0 & 21 & 7 & 93 & \textbf{110}\\
    \midrule
    $10^6 \leq \Card{\set} <10^7$ & 69 & 8 & 0 & 14 & 4 & 9 & 9\\
    \midrule
    $10^7 \leq \Card{\set} <10^8$ & 4 & 1 & 0 & 0 & 0 & \textbf{3} & 1\\
    \bottomrule
    \end{tabular}
    \caption{The performance comparison of \toolname~vis-a-vis existing hitting set solvers across different ranges of $\Card{\set}$.
    The column labeled $\sum$ indicates the number of instances within each range.}
    \label{table:performancecomparisonwithsetsize}
\end{table}

\begin{table}[h]
    \centering
    \begin{tabular}{m{7em} m{3em} m{3em} m{3em} m{3em} m{3em} m{3em} m{6em}} 
    \toprule
    & $\sum$ & SAT & ILP & MtMiner & SHD & Hitman & \toolname\\
    \midrule
    $10^1 \leq \dis <10^2$ & 1788 & 1447 & 590 & 718 & 582 & 1505 & \textbf{1563}\\
    \midrule
    $10^2 \leq \dis <10^3$ & 1145 & 278 & 57 & 279 & 46 & 325 & \textbf{372}\\
    \midrule
    $10^3 \leq \dis <10^4$ & 76 & 10 & 0 & 2 & 1 & \textbf{14} & 13\\
    \midrule
    $10^4 \leq \dis <10^5$ & 5 & 0 & 0 & 3 & \textbf{4} & 2 & 0\\
    \bottomrule
    \end{tabular}
    \caption{The performance comparison of \toolname~vis-a-vis existing hitting set solvers across different ranges of the average disjunction size. The notation $\dis$ denotes the average size of disjunctions. The column labeled $\sum$ indicates the number of instances within each range.}
    \label{table:performancecomparisonwithdisjsize}
\end{table}
We compare the number of benchmark instances solved by various MHS solvers, with the results summarized in~\Cref{table:hittingsetresult}. An instance is considered solved by a tool if it enumerates all minimal hitting sets within the specified time and memory limits.
As shown in the table, \toolname\ outperforms existing hitting set solvers by a significant margin. The runtime performance comparison between \toolname\ and other MHS solvers is illustrated in~\Cref{fig:runtime}. A point $(x, y)$ on these plots indicates that the tool solved $x$ instances, each within $y$ seconds.

To demonstrate the effectiveness of \toolname, we constructed virtual best solvers (VBS) combining \toolname~and other baseline methods. 
Specifically, we define two VBS configurations: (i) VBS$1$, which includes all baselines excluding \toolname~and (ii) VBS$2$, which includes all baselines including \toolname. 
The performance comparison is presented in~\Cref{table:virtualbestresult}.  
Notably, VBS2 solved $94$ instances that were not solved by any of the other baselines alone, highlighting the unique contributions of \toolname. 
A deeper analysis shows that SAT, ILP, MtMiner, SHD, and Hitman respectively solved $7, 0, 41, 9,$ and $42$ instances not solved by \toolname. In contrast, \toolname~solved $220, 1301, 987, 1234,$ and $123$ instances that were not solved by SAT, ILP, MtMiner, SHD, and Hitman, respectively.

Additionally, we compare the performance of minimal hitting set solvers based on three factors: the number of minimal hitting sets enumerated, the size of the instances, and the average size of the {\em disjunction}. These comparisons are visualized in~\Cref{fig:countminimalhittingset,fig:sizeminimalhittingset,fig:disjunctionminimalhittingset}.
The instance size is defined as the number of sets in $\set$, denoted by $\Card{\set}$.
The average disjunction size, denoted by $\dis$, is defined as $\frac{\sum_{S_i \in \set} \Card{S_i}}{\Card{\set}}$, representing the mean number of elements in each set $S_i$.
A point $(x, y)$ on these plots indicates that a solver successfully handled $x$ instances, where each instance has at most $y$ minimal hitting sets~(in \Cref{fig:countminimalhittingset}), has a size ($\Card{\set}$) of at most $y$~(in \Cref{fig:sizeminimalhittingset}) or features an average disjunction size of at most $y$~(in \Cref{fig:disjunctionminimalhittingset}). 
The plots show that \toolname\ is competitive with other minimal hitting set solvers in terms of the number of MHSes enumerated, the size of the instances, and the average disjunction size it can handle.

We compare the performance of \toolname~against other MHS solvers across varying ranges of instance size ($\Card{\set}$) and average disjunction size ($\dis$).
The results are shown in~\Cref{table:performancecomparisonwithsetsize,table:performancecomparisonwithdisjsize}, where $\dis$ denotes the average disjunction size. 
Both tables show that \toolname~struggles with instances that have larger $\Card{\set}$ or higher $\dis$; the number of solved instances decreases as $\Card{\set}$ or $\dis$ increases.
This performance drop may stem from the fact that \toolname~reduces MHS enumeration to a harder problem than the other solvers do.

\section{Conclusion}
\label{section:conclusion}
In this paper, we introduce \toolname, an ASP-based tool for minimal hitting set enumeration. 
Our ASP encoding is simple yet expressive, supporting versatile features such as enumerating minimal hitting sets of a {\em specified size, containing a given set of items, and optimal hitting sets with respect to objective functions}.
Our tool also supports advanced reasoning tasks over hitting sets, including {\em brave}, {\em cautious} reasoning, and counting~\cite{Kabir2024,KCM2024,KESHFM2022,KM2023,KTPM2025}.
By leveraging a reduction to ASP, \toolname\ takes advantage of advancements in ASP solving. 
Our experimental evaluation demonstrates that \toolname\ outperforms existing minimal hitting set solvers in enumerating all MHSes.

\nocite{*}
\bibliographystyle{eptcs}
\bibliography{example}
\end{document}